\newtheorem{lemma}{Lemma}[section]
\newtheorem{definition}{Definition}[section]
\newtheorem{proposition}{Proposition}[section]
\newtheorem{remark}{Remark}[section]
\newtheorem{assumption}{Assumption}
\newcommand{\E}{\mathbb{E}}
\title{Liquidity and Impact in Fair Markets
}
\author{Thibault Jaisson\thanks{I would like to thank my PhD supervisors Emmanuel Bacry and Mathieu Rosenbaum for their help in the realization of this work.}\\ CMAP, \'Ecole Polytechnique Paris\\ thibault.jaisson@polytechnique.edu}
\begin{document}

\maketitle

\begin{abstract}
\noindent We develop a theory which applies to any market dynamics that satisfy a fair market assumption on the nullity of the average profit of simple market making strategies. We show that for any such fair market, there exists a martingale fair price which corresponds to the average liquidation value (at the ask or the bid) of an infinitesimal quantity of stock. We show that this fair price is a natural reference price to compute the ex post gain of limit orders. Using only the fair market assumption, we link the spread to the impact of market orders on the fair price. We use our definition of the fair price to build empirical tests of the relevance of this notion whose results are consistent with our theoretical predictions.
\end{abstract}

\noindent
\textbf{Keywords:}
Fair price,
market orders,
limit orders,
bid-ask spread,
liquidity,
price impact,
market efficiency.

\section{Introduction}

One can argue that the main aim of financial markets is to give investors the possibility to balance their portfolio as cheaply and as quickly as possible. To do that, most modern financial exchanges allow their participants to buy or sell assets on an electronic order book via a continuous double auction: Throughout the day, participants can either use limit orders which enable them to buy or sell a given number of shares at a given price but with no guarantee of execution, thus adding liquidity to the order book, or they can use market orders which consume available limit orders thus taking liquidity from the order book. Understanding the nature and dynamics of this liquidity is of great importance for practitioners who wish to improve the efficiency of their strategies. For example, brokers who execute large trades need to split them\footnote{Such large trades are typically much larger than the liquidity available in the order book.} into many transactions and to wait for new liquidity to arrive between their orders. It is also crucial for regulators and exchanges who aim at improving the global efficiency of the market, for example by tuning the tick value\footnote{On electronic markets, assets cannot be exchanged at any price but only at certain discrete prices. The tick value of an asset is the discretization step, see \cite{dayri2012large}.} or transaction costs.\\

\noindent
Several statistical models of order books have been proposed to understand the dynamics of liquidity. Among them, we can cite ``zero intelligence'' models, see \cite{cont2013price}, \cite{farmer2003doubleauction} or \cite{rocsu2009dynamic}, which assume that market and limit orders arrive at Poisson rates. More recently, this topic has been tackled using mean field games, see \cite{lasry2013efficiency} or general Markovian dynamics, see \cite{huang2013simulating}. In \cite{bacry2014estimation}, the arrival of these orders is modeled as Hawkes processes whose kernels are empirically fitted to reproduce the causality between the different events. However, in most of these models, simple market making strategies can yield significant profits on average which we will see is not compatible with empirical computations, especially for small tick assets\footnote{A small tick asset is an asset whose average spread is significantly larger than the tick.}. In this work, we do not give a complete modeling of order book dynamics but rather conditions that reasonable models should satisfy. We show that given the impact of market orders on a fair price that we will define, we can derive the bid-ask spread dynamics.\\

\noindent
The most common measure of liquidity is the spread\footnote{The spread is the difference between the price of the lowest sell limit order (ask price) and the price of the highest buy limit order (bid price) in the order book.}. The links between the spread and other market variables (such as the volatility, the daily volume, the risk aversion of market makers or the transaction costs) have already been the subject of much literature. The classical view is that limit orders are mainly posted by market makers who sell at the ask and buy at the bid and thus ``earning the spread'', while market orders are used by other agents which can be seen as brokers or investors who want to execute a number of shares for various reasons (investment strategies, hedging, regulatory constraints,...).\\

\noindent
In the literature, three main reasons are proposed to explain the size of the spread: The first one is order processing costs, see \cite{huang1997components}. However, in modern electronic financial markets, these costs are negligible compared to the spread\footnote{At least ten times smaller according to \cite{wyart2008relation}.}. The second one is the market makers' risk aversion  (see for example \cite{avellaneda2008high} and \cite{gueant2013dealing} where the average return of market makers is balanced by the inventory risk they take on). The third reason, which we will develop in the next paragraphs, is adverse selection. Throughout this work, we perform empirical measures which seem to show that this reason is sufficient to explain most of the spread behavior.
\\

\noindent
Let us now explain where this adverse selection comes from. When an institutional investor executes a buy metaorder\footnote{A metaorder is a large transaction incrementally executed by an investor, see \cite{farmer2013efficiency}.}, the price on average goes up. This phenomenon, called the market impact of metaorders, has been subject to many empirical and theoretical studies, see \cite{almgren2005direct}, \cite{farmer2013efficiency}, \cite{kyle1985continuous} or  \cite{moro2009market}. The impact of metaorders can be understood in two different ways: It can either be viewed as informational in the sense that investors have a better information on the future of the price and trade because of this information, thus revealing the price, see \cite{farmer2013efficiency} or \cite{kyle1985continuous}. Market impact can also be seen as mechanical in the sense that even uninformed transactions move prices\footnote{See \cite{jaisson2014market} for a more detailed discussion on this topic.} through the imbalance in supply and demand they create.\\

\noindent
Whatever the reason for this impact, during the execution of their metaorders, investors mostly use market orders and therefore, the impact of metaorders is translated onto the order flow into an impact of market orders. This impact has also been the subject of many empirical and theoretical studies, see \cite{biais1995empirical}, \cite{bouchaud2004fluctuations}, \cite{eisler2012price} or \cite{madhavan1997why}. Because of this impact that market orders have on the price, the expected ex post gain of a limit order with respect to the mid price is not equal to the half spread: It is on average equal to the half spread minus the average market impact of a market order on the mid price. This phenomenon is called adverse selection.\\

\noindent
Theoretical studies using adverse selection to explain the bid-ask spread are often rooted to \cite{glosten1985bid}, where the bid and ask prices are set by market makers so that the expected ex post gain of their limit orders with respect to the fundamental value of the stock is equal to zero. In \cite{madhavan1997why}, the authors propose a simple $AR(1)$-type order flow model where the impact on a ``fair price'' is proportional to the surprise in the order flow\footnote{That is the sign of the next order minus its conditional expectation, see below.} so that the fair price is a martingale. In this model, bid and ask prices are set so that the expected ex post gain of a limit order with respect to the fair price is equal to zero. In \cite{wyart2008relation}, Bouchaud et al. apply these ideas to the more realistic FARIMA volume model, see \cite{beran1994statistics}. Assuming that most of price moves are due to volume impact and not exogenous information, they obtain nice relationships between the spread, market impact and the volatility per trade which are consistent with empirical measures. This theory has been extended to large tick assets in \cite{dayri2012large}.\\

\noindent
In most of the models cited above and in many other studies of the behavior of liquidity, spread and impact, there exists an unobservable reference price (called for example fundamental value \cite{madhavan1997why}, efficient price \cite{delattre2013estimating}, \cite{robert2011new} or fair price \cite{gueant2013dealing}). This price often represents some martingale underlying value estimated by market participants based on the available information of the stock based on the available information and is used as a reference price to compute ex post gains. However, it is often not clear what this price really means and why it is compared to this price that market making strategies should not have positive ex post gains.
In this paper, we assume that market dynamics are given (that is a model for limit and market orders) where limit orders can be posted at any price (in practice this corresponds to small tick assets). We will see that in this framework, we in fact do not need to postulate the existence of a reference price. Instead, we only assume that our market dynamics satisfy a fair market assumption which states that \textit{infinitesimal market making strategies are not profitable on average}.\\

\noindent
Our first aim is to show that for any such model, there exists a martingale fair price which generalises that of \cite{madhavan1997why}. This fair price can be thought of as the expected price for buying or selling an infinitesimal quantity of stock using limit orders. We will see that this fair price has nice properties which make it a natural reference price to compute ex post gains. Our second aim is to show that with no other assumptions, we can compute relations between the impact of market orders on this fair price and the liquidity in the order book. We revisit important theoretical results on the spread in light of our fair price. We show that even though on data we cannot compute independently of a model the impact function of market orders on the fair price or even the value of the fair price, we can estimate average ex post gains compared to our fair price. We obtain that our theoretical predictions hold empirically.\\

\noindent
This paper is organized as follows. In Section \ref{LIFM:fairprice}, we specify our no arbitrage hypothesis and show the existence of a model independent fair price. We propose a way to empirically check the existence of our fair price that we apply to real data. In Section \ref{LIFM:spread}, we give a general definition of the market impact of market orders. Using our fair price as a reference price instead of the mid price, we show that we can easily compute the bid and ask prices from the fair price dynamics. Furthermore, we empirically check that adverse selection and tick value explain most the bid-ask spread dynamics. We conclude in Section \ref{LIFM:conclusion}.

\section{Fair market and fair price}
\label{LIFM:fairprice}
In this section, we present our framework and our fair market assumption. We notably show the first of its implications: The existence of a model independent fair price that can be used to compute the ex post gains of limit orders.

\subsection{General framework}

Long term investors rebalance their portfolios using metaorders. At a given time, there is a priori no reason why the flows of buy and sell metaorders should match. Therefore, for the market to be fluid, that is for investors to always find counterparts to execute their trades, there must be intermediaries, called market makers, who provide liquidity at the bid and at the ask in the order book. Since there are many market makers, competition between them should imply that the spread and the liquidity are set so that simple market making strategies are at most marginally ``profitable'' (the profitability of a strategy remaining to be defined). Indeed, if a simple market making strategy were profitable, other market makers would place limit orders just in front of the limit orders of the strategy thus making it less profitable until it is no longer worth doing so, see \cite{wyart2008relation}.\\

\noindent
In what follows, we assume that market makers are risk neutral (in practice, this can be justified by the small time scales of market making strategies) and order processing costs are null (in practice negligible). Therefore, the profitability of a market making strategy is defined as its average profit.\\

\noindent
The concept of ``average'' seen as expectation being undefined outside of a model, we assume that we are given market dynamics. This means we have a probabilistic model of buy and sell market orders and liquidity functions that we define below.

\begin{definition}
We set $v^A_t$ (resp. $v^B_t$) as the volume of buy (resp. sell) market orders at time $t$: If there is a buy (resp. sell) market order of volume $v$ at time $t$ then $v^A_t=v$ (resp. $v^B_t=v$) else $v^A_t=0$ (resp. $v^B_t=0$). The processes $v^A_t$ and $v^B_t$ are assumed to be equal to zero except over a finite number points over any finite time interval.
\end{definition}

\begin{definition}
We set $Liq^{ask}_t$ (resp. $Liq^{bid}_t$) as the ask (resp. bid) cumulated liquidity function at time $t$: For any price $P$, $Liq^{ask}_t(P)$ (resp. $Liq^{bid}_t(P)$) is the sum of the volumes of all ask (resp. bid) limit orders whose prices are smaller (resp. higher) than $P$ at time $t$. These two processes are assumed to be right continuous with left limits.
\end{definition}

\begin{definition}
We set 
\begin{center}
$a_t=\text{argmin}_x\{Liq^{ask}_t(x)>0\}$ (resp. $b_t=\text{argmax}_x\{ Liq^{bid}_t(x)>0\}$)
\end{center}
as the ask (resp. bid) price at time $t$.
\end{definition}

\begin{definition}
\label{defmarket}
The market $M$ is defined as the process: $$M_t=(\sum_{s\leq t}v^A_s,\sum_{s\leq t}v^B_s,Liq^{ask}_t,Liq^{bid}_t)$$ which is right continuous with left limits in time. We write $\mathcal{F}=(\mathcal{F}_t)_t$ for the natural filtration of $M$ and $\mathcal{F}^*$ for the natural filtration of the left continuous (in $t$) version\footnote{The necessity to introduce such a filtration will appear in Section \ref{LIFM:spread}.} of $M$.
\end{definition}

\noindent Note that $\mathcal{F}$ and $\mathcal{F}^*$ are defined so that $\mathcal{F}_t^*\subset \mathcal{F}_t$ and if there is a trade (or any order book event) at time $t$, it is known from $\mathcal{F}_t$ but not from $\mathcal{F}^*_t$.\\

\noindent
We also choose the convention that if a market maker sets a limit order at time $t$ and if a market order arrives at time $t$, then the limit order is executed (provided price priority is respected).\\

\noindent
In this work, we take the tick value equal to zero (that is limit orders can be posted at any price). In practice, this assumption corresponds to small tick assets.

\begin{definition}
An infinitesimal market making strategy is a strategy that uses only very small\footnote{Say of one share, so that this strategy does not impact the rest of the market.} limit orders and that ends with no inventory.
\end{definition}

\noindent
Instead of making an assumption on the nullity of the ex post gain of limit orders compared to an arbitrary price, we directly make an assumption on the profit of market making strategies. This fair market hypothesis on the market dynamics can be understood as a perfect competition hypothesis between market makers:

\begin{assumption}
\label{LIFM:mmhyp}
An infinitesimal market making strategy does not affect the market dynamics and the expected profit of an infinitesimal market making strategy is zero.
\end{assumption}

\noindent
We use this formal assumption on the market dynamics to define a \textit{fair price} which will be the expectation of the price at which an agent can buy or sell an infinitesimal quantity of stocks using limit orders. Our fair price can be viewed as a generalization of the fair price of the model of \cite{madhavan1997why} that we will present in the next paragraph as an illustration of our fair market framework.

\begin{remark}
In \cite{donier2012market}, the same kind of arguments about perfect competition between market makers is used to compute properties of the impact of metaorders.
\end{remark}

\subsection{The example of the MRR model}

In this subsection, we present the Madhavan Richardson Roomans (MRR for short) market model introduced in \cite{madhavan1997why}, which is a fair market model by construction. We compute the expectation of what we call the next bid and ask prices after time $n$. This simple computation illustrates how a fair price appears in any fair market.

\subsubsection{The MRR model}

The MRR model is a simple impact model in which orders are autocorrelated but prices are martingale. Moreover, the ask (resp. bid) price is defined as the expected future fair price if there is a buy (resp. sell) order. Therefore, by construction, the ex post gain of limit orders with respect to the martingale fair price is null and Assumption \ref{LIFM:mmhyp} is satisfied.\\

\noindent
In this discrete time model, the $n^{th}$ market order is either a unit volume buy ($\varepsilon_n=+1$) or sell ($\varepsilon_n=-1$) order and the sign process is an $AR(1)$-type process:
$$\E[\varepsilon_n|\mathcal{F}_{n-1}]=\rho \varepsilon_{n-1}\footnote{Therefore, $\mathbb{P}[\epsilon_n=+1|\mathcal{F}_{n-1}]=(\rho \varepsilon_{n-1}+1)/2$.}$$ with $0<\rho<1$.\\

\noindent
The market impact of an order on the fair price $p$ is proportional to the surprise of the order flow in the sense that:

$$p_{n+1}-p_n=\theta (\epsilon_n-\E[\epsilon_n|\mathcal{F}_{n-1}])+\zeta_n=\theta (\epsilon_n-\rho \epsilon_{n-1})+\zeta_n$$
with $\theta >0$ and $\mathcal{F}_{n-1}$ is the sigma algebra generated by $(\varepsilon_k)_{k<n}$ and $(p_k)_{k\leq n}$ and where $\zeta$ is an independent white noise corresponding to the effect of outside information on the price that we set here to zero to simplify.\\

\noindent
In this model, imposing an ex post gain equal to zero for bid and ask limit orders gives the ask price $$a_n=p_n+\theta (1-\rho \epsilon_{n-1}),$$ the bid price $$b_n=p_n+\theta (-1-\rho \epsilon_{n-1})\footnote{Note than $a_n$ and $b_n$ are in $\mathcal{F}_{n-1}$.}$$ and thus the spread $$\Phi_n=a_n-b_n=2\theta.$$

\subsubsection{Next bid and ask prices}

In the MRR model, bid and ask prices are taken so that the (expected) ex post gain of limit orders with respect to the fair price is equal to zero. Let us now show that conversely, given the dynamics of the ask (or bid) price, we can redefine the fair price. To do this, we need to introduce the following quantities:

\begin{definition}
The next ask (resp. bid) price after time $n$ denoted by $na_n$ (resp. $nb_n$) is the price of the first ask market order after $n$ ($n$ included). More formally:

$$na_n=a_{\inf\{k\geq n;\epsilon_k=+1\}}~ \emph{and} ~nb_n=b_{\inf\{k\geq n;\epsilon_k=-1\}}.$$

\end{definition}

\noindent
Given the dynamics of $(a,b,\epsilon)$, we can derive the fair price as the conditional expectation of the next ask or the next bid price (since they are equal).

\begin{proposition}
\label{LIFM:p10}
We have $\E[na_n|\mathcal{F}_{n-1}]=\E[nb_n|\mathcal{F}_{n-1}]=p_n$.
\end{proposition}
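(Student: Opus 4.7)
My plan rests on a single algebraic observation that collapses the next ask price onto a value of the fair price, after which the martingale property of $p$ does all the work.

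Let $\tau=\inf\{k\geq n:\epsilon_k=+1\}$, so that $na_n=a_\tau$. By the MRR definition of the ask, $a_\tau=p_\tau+\theta(1-\rho\epsilon_{\tau-1})$. The crucial point is that by definition of $\tau$ we have $\epsilon_\tau=+1$, so $1=\epsilon_\tau$ and therefore
$$na_n=a_\tau=p_\tau+\theta(\epsilon_\tau-\rho\epsilon_{\tau-1})=p_{\tau+1},$$
using the defining increment $p_{k+1}-p_k=\theta(\epsilon_k-\rho\epsilon_{k-1})$. In words: the ask is chosen so that, whenever it is actually hit, the ask equals the fair price one step later; thus the random variable $na_n$ is literally a value of the fair price.

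Next I would invoke the martingale property to conclude $\E[p_{\tau+1}\mid\mathcal{F}_{n-1}]=p_n$. Writing
$$p_{\tau+1}=p_n+\sum_{k=n}^{\tau}\theta(\epsilon_k-\rho\epsilon_{k-1}),$$
the summands $\Delta_k:=\theta(\epsilon_k-\rho\epsilon_{k-1})$ are centered conditionally on $\mathcal{F}_{k-1}$ by the MRR assumption $\E[\epsilon_k\mid\mathcal{F}_{k-1}]=\rho\epsilon_{k-1}$, they are bounded in absolute value by $2\theta$, and $\tau-n+1$ has geometric tails since $\mathbb{P}(\epsilon_k=+1\mid\mathcal{F}_{k-1})=(1+\rho\epsilon_{k-1})/2\geq(1-\rho)/2>0$, so $\E[\tau\mid\mathcal{F}_{n-1}]<\infty$. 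Optional stopping (or a direct dominated-convergence argument on the partial sums $\sum_{k=n}^{\tau\wedge N}\Delta_k$) then yields $\E[\sum_{k=n}^\tau\Delta_k\mid\mathcal{F}_{n-1}]=0$, i.e.\ $\E[na_n\mid\mathcal{F}_{n-1}]=p_n$.

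The statement for $nb_n$ is obtained by the symmetric argument: set $\sigma=\inf\{k\geq n:\epsilon_k=-1\}$, note $\epsilon_\sigma=-1$ so that $b_\sigma=p_\sigma+\theta(-1-\rho\epsilon_{\sigma-1})=p_\sigma+\theta(\epsilon_\sigma-\rho\epsilon_{\sigma-1})=p_{\sigma+1}$, and apply the same optional-stopping argument to obtain $\E[nb_n\mid\mathcal{F}_{n-1}]=p_n$.

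The only subtlety I expect is filtration bookkeeping: since by definition $p_n\in\mathcal{F}_{n-1}$ while $\epsilon_n\notin\mathcal{F}_{n-1}$, $p$ is a martingale only with respect to the shifted filtration $\mathcal{G}_n:=\mathcal{F}_{n-1}$, and $\tau$ itself is not a $\mathcal{G}$-stopping time although $\tau+1$ is (because $\{\tau+1\leq k\}=\{\tau\leq k-1\}$ depends only on $\epsilon_n,\ldots,\epsilon_{k-1}$, which lies in $\mathcal{F}_{k-1}=\mathcal{G}_k$). This matches exactly the index $\tau+1$ produced by the algebraic identity above, which is why the argument goes through without any hidden shift in the martingale identity.
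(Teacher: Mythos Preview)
Your proof is correct and takes a genuinely different route from the paper's. The paper exploits the Markov structure to write $\E[nb_n\mid\mathcal{F}_{n-1}]=p_n+f^b(\varepsilon_{n-1})$ for an unknown function $f^b$, then conditions on the value of $\varepsilon_n$ to obtain a one-step recursion; plugging in $\varepsilon_{n-1}=\pm1$ gives two linear equations that force $f^b\equiv 0$. Your argument instead rests on the exact pathwise identity $a_\tau=p_{\tau+1}$ (and $b_\sigma=p_{\sigma+1}$), which follows because the ask is, by construction, the fair price that would prevail if a buy order arrives; this reduces the statement to optional stopping for the bounded-increment martingale $p$ at the integrable stopping time $\tau+1$. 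Your filtration bookkeeping with $\mathcal{G}_k=\mathcal{F}_{k-1}$ is exactly right and is what makes the optional stopping argument go through cleanly. The paper's recursion is closer in spirit to how one would extend the computation to variants of the model (e.g.\ different autoregressive structures), while your approach is more conceptual: it makes transparent that the result is nothing more than the martingale property of $p$ combined with the zero ex post gain built into the definition of $a_n$ and $b_n$, and it avoids solving any equations.
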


\begin{proof}
Because of the Markov property of the sign process and the invariance of the model with respect to translation of the fair price, the conditional expectation of the next bid price writes as
$$\E[nb_n|\mathcal{F}_{n-1}]=p_n+f^b(\varepsilon_{n-1})$$
for some deterministic function $f^b$. Let us decompose this expectation:
$$\E[nb_n|\mathcal{F}_{n-1}]=\E[\mathbb{I}_{\varepsilon_n=+1}nb_{n+1}|\mathcal{F}_{n-1}]+\E[\mathbb{I}_{\varepsilon_n=-1}b_n|\mathcal{F}_{n-1}]$$
and use that:
\begin{itemize}
\item{$b_n$ is in $\mathcal{F}_{n-1}$ so that $$\E[\mathbb{I}_{\varepsilon_n=-1}b_n|\mathcal{F}_{n-1}]=b_n\mathbb{P}[\varepsilon_n=-1|\mathcal{F}_{n-1}].$$}
\item{By the tower property of conditional expectations
\begin{eqnarray*}
\E[\mathbb{I}_{\varepsilon_n=+1}nb_{n+1}|\mathcal{F}_{n-1}]&=&\E[\mathbb{I}_{\varepsilon_n=+1}\E[nb_{n+1}|\mathcal{F}_{n-1},\varepsilon_n=+1]|\mathcal{F}_{n-1}]\\
&=&\mathbb{P}[\varepsilon_n=+1|\mathcal{F}_{n-1}]\E[nb_{n+1}|\mathcal{F}_{n-1},\varepsilon_n=+1]\\
&=&\mathbb{P}[\varepsilon_n=+1|\mathcal{F}_{n-1}](p_n+\theta (1-\rho \varepsilon_{n-1})+ f^b(+1)). 
\end{eqnarray*}}
\end{itemize}
We thus get that for $\varepsilon_{n-1}=\pm 1$,
\begin{eqnarray*}
\E[nb_n|\mathcal{F}_{n-1}]=p_n+f^b(\varepsilon_{n-1})&=&(p_n+\theta (-1-\rho \epsilon_{n-1}))\times (1-\rho\varepsilon_{n-1})/2\\
&+&(p_n+\theta (1-\rho \varepsilon_{n-1})+ f^b(+1))\times (1+\rho\varepsilon_{n-1})/2.
\end{eqnarray*}
Taking $\varepsilon_{n-1}=+1$, this yields that $f^b(+1)=0$ and then taking $\varepsilon_{n-1}=-1$, this yields that $f^b(-1)=0$ and therefore $\E[nb_n|\mathcal{F}_{n-1}]=p_n$.\\


\noindent In the same way, we can show that $\E[na_n|\mathcal{F}_{n-1}]=p_n$.
\end{proof}

\noindent
Let us now show that Proposition \ref{LIFM:p10} does not come from the particular structure of the MRR model but only from the nullity of the average profit of market making strategies in the MRR model. This implies that the fair price can be generalized to any market model satisfying Assumption \ref{LIFM:mmhyp}.

\subsection{General definition of the fair price}

In a more general way, let us assume that we have known continuous time dynamics for the ask price, the bid price and market orders and that they satisfy Assumption \ref{LIFM:mmhyp}. We begin by defining in continuous time the next bid and ask prices.

\begin{definition}
The next ask (resp. bid) price after time $t$ denoted $na_t$ (resp. $nb_t$) is the price of the first ask market order after (including) $t$. More formally: $$na_t=a_{\inf\{\tau\geq t;v_\tau^A>0\}}$$ and $$nb_t=b_{\inf\{\tau\geq t;v_\tau^B>0\}}.$$
\end{definition}

\noindent We have the following important result.

\begin{proposition}
\label{LIFM:p1}
Under Assumption \ref{LIFM:mmhyp}, we have: $$\E[na_t|\mathcal{F}_{t}^*]=\E[nb_t|\mathcal{F}_{t}^*].$$
\end{proposition}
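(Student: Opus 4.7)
The plan is to exhibit a single infinitesimal market making strategy initiated at time $t$ whose realized profit is exactly $na_t-nb_t$, and then invoke Assumption \ref{LIFM:mmhyp} to conclude.

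Concretely, at time $t$ I would simultaneously post an infinitesimal bid limit order at the current best bid $b_t$ and an infinitesimal ask limit order at the current best ask $a_t$, and leave both of them resting in the book until each is executed. The decision to post these orders, as well as the prices chosen, depend only on information in $\mathcal{F}_t^*$ (together with the convention that orders placed at time $t$ may be executed against an order arriving at $t$). By Assumption \ref{LIFM:mmhyp} these orders do not affect the dynamics of $M$, so the bid leg is filled against the first sell market order occurring at or after $t$, at that instant's prevailing bid price, which is exactly $nb_t$; symmetrically the ask leg is filled at $na_t$. Once both legs have been executed the inventory is back to zero, so this is indeed an infinitesimal market making strategy in the sense of the definition.

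A pathwise bookkeeping shows that, whichever of the two executions happens first, the net cash collected at the end is
$$\text{P\&L}=na_t-nb_t.$$
Applying Assumption \ref{LIFM:mmhyp} conditionally on $\mathcal{F}_t^*$ (the natural initial information for a strategy started at $t$) gives $\E[na_t-nb_t\mid \mathcal{F}_t^*]=0$, which is the claim.

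The main obstacle is purely technical: justifying that the infinitesimal bid (resp.\ ask) placed at $b_t$ (resp.\ $a_t$) is actually filled at $nb_t$ (resp.\ $na_t$), given that between $t$ and the next sell (resp.\ buy) market order the best quote may move away from the price at which our order sits. The cleanest way to handle this is to invoke the ``infinitesimal'' nature of the order together with the convention already adopted in the paper, namely that an infinitesimal order glued to the best quote does not disturb the market and is by convention hit by the next market order on its side at the prevailing best price; if one wished to avoid this convention one would have to recover the same identity by a limiting argument on vanishing order sizes. Either way the economic content—that posting one infinitesimal unit on each side and waiting for both to clear must have zero expected profit—is exactly what Assumption \ref{LIFM:mmhyp} provides.
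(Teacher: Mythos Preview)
Your argument is correct and is essentially the paper's proof. The paper states the strategy directly as ``maintaining two limit orders just in front of the best ask and the best bid prices from $t$ until they are executed,'' i.e.\ a dynamic strategy that follows the best quotes, which immediately gives execution prices $na_t$ and $nb_t$; your initial static description (post once at $a_t$, $b_t$ and leave the orders there) runs into the difficulty you yourself flag, and your resolution---orders ``glued to the best quote''---is exactly the paper's maintained-order strategy, so after your own clarification the two proofs coincide.
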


\begin{proof}

If $\E[na_t|\mathcal{F}^*_{t}]>\E[nb_t|\mathcal{F}^*_{t}]$, maintaining two limit orders just in front of the best ask and the best bid prices from $t$ until they are executed yields a positive profit on average. In the same way, if $\E[na_t|\mathcal{F}^*_{t}]<\E[nb_t|\mathcal{F}^*_{t}]$, this strategy yields a negative profit on average.

\end{proof}

\begin{remark}
This proposition might at first sight seem unnatural because the ask price is always above the bid price but we have shown that it is true in the MRR model and we will see that it is well satisfied on data.
\end{remark}


\noindent We can now define the fair price.

\begin{definition}
The fair price $P_t$ is defined by:
\begin{equation}
\label{LIFM:deffairprice}
P_t=\E[na_t|\mathcal{F}_{t}^*]=\E[nb_t|\mathcal{F}_{t}^*].
\end{equation}
\end{definition}

\noindent We have the first following result about the fair price.

\begin{proposition}
The fair price is a $\mathcal{F}^*$-martingale: $$\E[P_{t+s}|\mathcal{F}_{t}^*]=\E[na_{t+s}|\mathcal{F}_{t}^*]=\E[na_t|\mathcal{F}_{t}^*]=P_t.$$ 
\end{proposition}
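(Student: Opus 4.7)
The plan is to verify the three equalities in the displayed chain. The first, $\E[P_{t+s}|\mathcal{F}_t^*]=\E[na_{t+s}|\mathcal{F}_t^*]$, follows from the definition $P_{t+s}=\E[na_{t+s}|\mathcal{F}_{t+s}^*]$ and the tower property, since $\mathcal{F}_t^*\subset\mathcal{F}_{t+s}^*$. The third, $\E[na_t|\mathcal{F}_t^*]=P_t$, is just (\ref{LIFM:deffairprice}). So the whole argument reduces to the middle equality $\E[na_{t+s}|\mathcal{F}_t^*]=\E[na_t|\mathcal{F}_t^*]$.

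For the middle equality I would mimic the strategy argument used in Proposition \ref{LIFM:p1}. Consider the infinitesimal round-trip that at time $t$ posts a sell limit order just in front of the best ask and at time $t+s$ posts a buy limit order just in front of the best bid. The sell executes at the first buy market order at or after $t$, fetching the price $na_t$; the buy executes at the first sell market order at or after $t+s$, costing $nb_{t+s}$; and regardless of which execution comes first, the terminal inventory is zero with realised cash equal to $na_t-nb_{t+s}$. Being infinitesimal and ending flat, this is admissible under Assumption \ref{LIFM:mmhyp}, so
\[
\E[na_t\mid\mathcal{F}_t^*]=\E[nb_{t+s}\mid\mathcal{F}_t^*].
\]

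To convert the right-hand side into $\E[na_{t+s}|\mathcal{F}_t^*]$, I would invoke Proposition \ref{LIFM:p1} at time $t+s$, obtaining $\E[na_{t+s}|\mathcal{F}_{t+s}^*]=\E[nb_{t+s}|\mathcal{F}_{t+s}^*]$, and then condition down to $\mathcal{F}_t^*$ via the tower property. Chaining this with the previous identity yields the middle equality, and assembling the three pieces completes the proof. The one delicate point—essentially the only obstacle—is checking that the round-trip really qualifies as an infinitesimal market making strategy: both limit orders must execute in finite time so that the position closes, which is tacit in the framework (as in the proof of Proposition \ref{LIFM:p1}) but worth flagging.
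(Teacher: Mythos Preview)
Your proposal is correct and follows essentially the same approach as the paper: a round-trip strategy with one leg at time $t$ and the other at time $t+s$, combined with Proposition~\ref{LIFM:p1} and the tower property. The only cosmetic difference is the orientation of the legs---the paper posts a bid at $t$ and an ask at $t+s$ (and applies Proposition~\ref{LIFM:p1} at time $t$), whereas you post an ask at $t$ and a bid at $t+s$ (and apply Proposition~\ref{LIFM:p1} at time $t+s$ via the tower property); both yield the middle equality equally well, and your write-up is in fact more complete than the paper's one-line sketch.
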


\begin{proof}
If $\E[na_{t+s}|\mathcal{F}^*_{t}]>\E[na_t|\mathcal{F}_{t}^*]=\E[nb_t|\mathcal{F}^*_{t}]$, posting a bid order in $t$ just in front of the best bid and an ask order in $t+s$ just in front of the best ask gives a positive profit on average.
\end{proof}

\noindent
It is thus possible to obtain the fair price from any model in which there is an ask price, a bid price and an order flow process which satisfy Assumption \ref{LIFM:mmhyp}. We will see in the next section that it is possible to proceed the other way around and to get the bid and ask prices from any impact model on a fair price (as it is done in the construction of the MRR framework).\\


\noindent
Moreover, since the fair price is a martingale, the response function associated to it defined almost as in \cite{bouchaud2004fluctuations} as $$R(\delta)=\E[P_{t+\delta}-P_t|v^A_t>0]$$ does not depend on $\delta$. Indeed, for $\delta_2>\delta_1$,$$R(\delta_2)-R(\delta_1)=\E[\E[P_{t+\delta_2}-P_{t+\delta_1}|\mathcal{F}^*_{t+\delta_1}]|v^A_t>0]=0.$$ We will later see that this makes the fair price a natural reference price to compute impact functions.

\begin{remark}
In the previous proofs, we have assumed that we could always place ourselves just in front of the ask or bid prices. In fact there is the tick value/priority issue which implies that one cannot place limit orders at any position in the order book and thus Proposition \ref{LIFM:p1} should become: $$|\E[na_t|\mathcal{F}_{t}^*]-\E[nb_t|\mathcal{F}_{t}^*]|\leq 2\alpha$$ where $\alpha$ is the tick value. For many assets, we cannot neglect the tick value but in this work, we assume that it is null.
\end{remark}

\subsection{Verification of Proposition \ref{LIFM:p1} on data}
\label{LIFM:sdata}

\noindent
The aim of this paragraph is to propose a way to empirically check Proposition \ref{LIFM:p1} on real data. By definition of the conditional expectation, Proposition \ref{LIFM:p1}
$$\E[na_t-nb_t|\mathcal{F}^*_{t}]=0$$
implies:

$$\forall E_t\in \mathcal{F}^*_{t}~~\frac{\E[na_t 1_{E_t}]}{\mathbb{P}(E_t)}=\frac{\E[nb_t 1_{E_t}]}{\mathbb{P}(E_t)}.$$

\noindent
By definition of $(\mathcal{F}^*_t)$, we get that for any measurable set $E$ in the image space where $(M_{t-y})_{y>0}$\footnote{Where $M$ is the market process, see Definition \ref{defmarket}.} takes value, denoted $\tilde{\sigma}$, we have

$$\frac{\E[na_t 1_{E}((M_{t-y})_{y>0})]}{\mathbb{P}[(M_{t-y})_{y>0}\in E]}=\frac{\E[nb_t 1_{E}((M_{t-y})_{y>0})]}{\mathbb{P}[(M_{t-y})_{y>0}\in E]}.$$

\noindent Let us now assume that the market is stationary in the sense that for every $E$,
\begin{equation}
\label{eqempiricalver}
\frac{\E[(na_t-nb_t) 1_{E}((M_{t-y})_{y>0})]}{\mathbb{P}[(M_{t-y})_{y>0}\in E]}
\end{equation}
does not depend on $t$.\\

\noindent
To empirically check that Proposition \ref{LIFM:p1} is valid, we need to consider a family of past events $(E^i)_i$ (that are frequent enough) and for every $i$, we use a law of large numbers to show that \eqref{eqempiricalver} is equal to zero.
Here is the list of events that we consider:

\begin{itemize}
\item{$E^0=\tilde{\sigma}$.}
\item{$E^1((M_{t-y})_{y>0})$: The last market order before $t$ is a buy order.}
\item{$E^2((M_{t-y})_{y>0})$: The last market order before $t$ is a sell order.}
\item{$E^3((M_{t-y})_{y>0})$: The last mid price move before $t$ is upward.}
\item{$E^4((M_{t-y})_{y>0})$: The last mid price move before $t$ is downward.}
\end{itemize}

\begin{remark}
We could have taken any event on the past that happens often enough.
\end{remark}

\subsubsection*{Description of our database}

Our database is a list of 25 futures\footnote{Data kindly provided by QuantHouse EUROPE/ASIA, http://www.quanthouse.com.}. Standard notations are used to name the assets. For each of these futures, we consider all the trading days of January 2012. For each of these days, we only take the most liquid maturity. We have a list of all the first limit order book events (market orders, first limit limit orders and first limit cancellations) of the day. We cut off the first and last hours of the trading day.\\

\noindent
We then consider a regular time grid $(t_n)$ of 10 seconds interval\footnote{We could have taken any reasonable time grid. We choose this one to have enough points to approximate the expectation by its empirical average on this time grid. It is useless to take points which are too close together since then they would be too dependent.} and for every $t_n$, we compute the $E^i((M_{t-y})_{y>0})$ and the next bid and ask prices. Averaging over the $N$ points of the month, we can compute the empirical approximations $\Delta^i$ of $\E[(na_t-nb_t) 1_{E^i}((M_{t-y})_{y>0})]/\mathbb{P}(E^i((M_{t-y})_{y>0}))$:
\begin{equation}
\label{LIFM:eqfp}
\Delta^i=\frac{\displaystyle\frac{1}{N}\sum_{k=1}^N 1_{E^i_{t_n}}(na_{t_n}-nb_{t_n})}{\displaystyle\frac{1}{N}\sum_{k=1}^N 1_{E^i_{t_n}}}.
\end{equation}

\noindent We present our results in Table \ref{LIFM:tablefp2}.\\

\begin{savenotes}
\begin{table}[H]
\begin{center}
\begin{tabular}{|c|c|c|c|c|c|c|c|c|}
\hline
Stock & $\Delta^0$ & $\Delta^1$ & $\Delta^2$ & $\Delta^3$ & $\Delta^4$ & Tick & Spread/Tick\\
\hline
xFGBL & 3.31e-3 & 3.33e-3 & 3.29e-3 & 3.33e-3 & 3.29e-3 & 1.e-2 & 1.09 \\
\hline
xFGBS & 3.27e-3 & 3.27e-3 & 3.26e-3 & 3.21e-3 & 3.28e-3 & 5.e-3 & 1.03 \\
\hline
xFGBM & 4.05e-3 & 4.10e-3 & 4.01e-3 & 4.05e-3 & 4.06e-3 & 1.e-2 & 1.04 \\
\hline
xFDAX & 6.26e-2 & 6.30e-2 & 6.20e-2 & 5.76e-2 & 6.75e-2 & 5.e-1 & 1.70 \\
\hline
xFSXE & 6.34e-1 & 6.42e-1 & 6.23e-1 & 6.33e-1 & 6.34e-1 & 1. & 1.03 \\
\hline
xFEURO & 3.42e-5 & 3.41e-5 & 3.43e-5 & 3.40e-5 & 3.44e-5 & 1.e-4 & 1.16 \\
\hline
xFAD & 3.27e-5 & 3.96e-5 & 2.38e-5 & 3.42e-5 & 3.11e-5 & 1.e-4 & 1.26 \\
\hline
xFJY & 7.67e-7 & 5.9e-7 & 2.92e-6 & 8.36e-7 & 7.02e-7 & 1.e-6 & 1.21 \\
\hline
xFBP & 2.74e-5 & 2.11e-5 & 3.15e-5 & 2.94e-5 & 2.53e-5 & 1.e-4 & 1.40 \\
\hline
xFCAD & 2.94e-5 & 2.68e-5 & 3.10e-5 & 3.01e-5 & 2.87e-5 & 1.e-4 & 1.21 \\
\hline
xFCH & 3.25e-5 & 3.14e-5 & 3.04e-5 & 2.52e-5 & 3.83e-5 & 1.e-4 & 1.56 \\
\hline
xFSP & 1.66e-1 & 1.69e-1 & 1.63e-1 & 1.68e-1 & 1.65e-1 & 2.5e-1 & 1.01 \\
\hline
xFND & 1.01e-1 & 9.77e-2 & 1.04e-1 & 1.00e-1 & 1.01e-1 & 2.5e-1 & 1.21 \\
\hline
xFDJ & 4.43e-1 & 4.47e-1 & 4.36e-1 & 4.40e-1 & 4.46e-1 & 1. & 1.39 \\
\hline
xFGOLD & 3.18e-2 & 3.08e-2 & 3.31e-2 & 3.10e-2 & 3.26e-2 & 1.e-1 & 1.93 \\
\hline
xFCOPP & 2.31e-4 & 2.28e-4 & 2.34e-4 & 2.43e-4 & 2.19e-4 & 5.e-4 & 2.13 \\
\hline
\end{tabular}

\end{center}

\caption{Expectations of the difference between the next bid and the next ask for 5 different events during January 2012.}
\label{LIFM:tablefp2}
\end{table}
\end{savenotes}

\noindent
As we have theoretically shown under the fair market assumption, we observe that for all events, the difference is smaller than one tick. Therefore for small tick assets, it is marginal and our prediction is empirically quite satisfied.

\subsection{The fair price as a reference price}

In a rather large class of market dynamics, we have shown that it is always possible to define a martingale ``fair price'' which is the expectation of the price at which it is possible for agents to infinitesimally buy or sell assets using limit orders. 

\begin{definition}
We define the expected ex post gain of an ask (resp. bid) limit order compared to a price process as the expectation of the difference between the price of the limit order and the price process just after the limit order (resp. the difference between the price process after the limit order and the price of the limit order).
\end{definition}

\noindent
In this work, the most important property of the fair price is the following result. It explains why it should be considered as a rigorous reference price to compute ex post gains.

\begin{proposition}
\label{LIFM:epgpro}
Under Assumption \ref{LIFM:mmhyp}, the expected ex post gain of a limit order compared to the fair price is equal to zero.
\end{proposition}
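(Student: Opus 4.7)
The plan is to invoke Assumption~\ref{LIFM:mmhyp} applied to a two-legged infinitesimal market making strategy that wraps the limit order in question with an offsetting hedge, and then to identify the expected hedging cost with the fair price using the defining identity~\eqref{LIFM:deffairprice}. I treat the ask-limit case; the bid case is symmetric.

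First I would fix an infinitesimal ask limit order posted at time $t$ at some price $q$, and let $\tau\ge t$ denote its random execution time. By the convention stated earlier, this infinitesimal order does not perturb the market dynamics, so $\tau$ is a stopping time of $\mathcal{F}$ and at $\tau^+$ the trader holds a short position of one share with $q$ in cash. I would then augment the order into the following infinitesimal market making strategy: post the ask limit at $q$ at time $t$, and as soon as it executes, immediately post a bid limit order just in front of the best bid and hold it until it is filled. Since the tick is zero and the hedging leg is infinitesimal, it is filled at price $nb_{\tau^+}$, after which the trader has no inventory and realized cash $q - nb_{\tau^+}$.

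Next I would apply the fair market assumption to this composite strategy, which gives
\begin{equation*}
\mathbb{E}[q - nb_{\tau^+}] = 0.
\end{equation*}
Conditioning on $\mathcal{F}^*_{\tau^+}$ and using the tower property together with the definition \eqref{LIFM:deffairprice} of the fair price yields
\begin{equation*}
\mathbb{E}[nb_{\tau^+}] \;=\; \mathbb{E}\bigl[\mathbb{E}[nb_{\tau^+}\mid \mathcal{F}^*_{\tau^+}]\bigr] \;=\; \mathbb{E}[P_{\tau^+}].
\end{equation*}
Combining the two displays gives $\mathbb{E}[q - P_{\tau^+}] = 0$, which is precisely the expected ex post gain of the ask limit order versus the fair price as in the preceding definition. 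For a bid limit order one runs the mirror argument, using $P_t = \mathbb{E}[na_t\mid \mathcal{F}^*_t]$ in place of $\mathbb{E}[nb_t\mid \mathcal{F}^*_t]$ and hedging with an ask limit just in front of the best ask.

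The main obstacle is not the algebra but the verification that the composite strategy is an admissible ``infinitesimal market making strategy'' in the sense of Assumption~\ref{LIFM:mmhyp}: one needs $\tau$ and the hedging execution time to be almost surely finite (otherwise the strategy never ``ends with no inventory'') and suitable integrability of $P$ so that conditioning at $\tau^+$ is legitimate. A secondary point is to argue that posting the hedging bid ``just in front of the best bid'' genuinely achieves an expected execution price of $P_{\tau^+}$; this is exactly the content of identity \eqref{LIFM:deffairprice} applied at the stopping time $\tau^+$, which in turn relies on the same stopping-time reasoning used in the martingale proposition immediately above.
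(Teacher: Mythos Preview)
Your proof is correct and follows essentially the same approach as the paper: both wrap the given limit order in a two-legged infinitesimal market making strategy whose hedging leg follows the opposite best quote, then identify the expected execution price of that hedge with the fair price via~\eqref{LIFM:deffairprice}. The only cosmetic difference is that the paper argues by contradiction (assuming a strictly positive ex post gain and exhibiting a strategy with strictly positive expected profit), whereas you work directly from the zero-expected-profit identity; your version handles both signs at once and is more explicit about the stopping-time and integrability caveats that the paper leaves implicit.
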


\begin{proof}
Let us assume that at time $t$, the expected ex post gain of an ask limit order is strictly positive. This means that the expectation $\mathcal{E}$ of the fair price just after the execution of this order is smaller than the execution price of the order $A$. Consider the following infinitesimal market making strategy:
\begin{itemize}
\item{Set an ask limit order just in front of this ask limit order.} 
\item{Right after the eventual execution of this ask limit order, set a bid limit order which follows the best bid.}
\end{itemize}
Then, by definition of the fair price (expectation of the price of the next trade at the bid), the expected execution price $B$ of the bid limit order is the value of the fair price right after the execution of the ask limit order $\mathcal{E}$. Therefore, the expected profit of this strategy is $A-B=A-\mathcal{E}>0$.
\end{proof}

\noindent
In the next section, we apply Proposition \ref{LIFM:epgpro} to limit orders in various contexts.

\section{Bid-ask spread, fair price and impact}
\label{LIFM:spread}

\subsection{Necessary and sufficient conditions on the bid and ask prices}

In the previous section, we have given a way to define a fair price and we have shown that the ex post gain of any limit order compared to this price has to be null under the fair market hypothesis. Here, we proceed the other way around and show that given a model of the impact of the flow of market orders on the martingale fair price, we can derive no arbitrage bid and ask prices (as it is done in the construction of the MRR model). To do so, we proceed in the same vein as in \cite{madhavan1997why} and \cite{wyart2008relation} applying Proposition \ref{LIFM:epgpro} to limit orders placed at the best bid and ask prices.

\begin{proposition}
\label{LIFM:p2}
Given dynamics for the fair price and the order flow $(P_t,v^A_t,v^B_t)$, the bid and ask prices at time $t$, $a_t$ and $b_t$ must satisfy:
\begin{subequations}
\begin{equation}
\label{LIFM:ask}
a_t=\E[P_{t^+}|v^A_t>0,\mathcal{F}_{t}^*]\footnote{The notation $\E[P_{t^+}|v^A_t>0,\mathcal{F}_{t}^*]$ means $\underset{\varepsilon\rightarrow 0+}{\lim}\E[P_{t+\varepsilon}|\tau^A_t\in[t,t+\varepsilon],\mathcal{F}^*_t]$ that we assume to always exist, where $\tau^A_t$ is the time of the next ask order after (including) $t$.}
\end{equation}
and
\begin{equation}
\label{LIFM:bid}
b_t=\E[P_{t^+}|v^B_t>0,\mathcal{F}_{t}^*].
\end{equation}
\end{subequations}
\end{proposition}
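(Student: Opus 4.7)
The plan is to derive both identities by applying Proposition \ref{LIFM:epgpro} (zero expected ex post gain of any limit order against the fair price, guaranteed by Assumption \ref{LIFM:mmhyp}) to an infinitesimal ask (resp. bid) limit order posted at the corresponding best quote. The MRR calculation already shows how this should look in a concrete model; what I want to do is redo the same bookkeeping model-freely.

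First I would consider the following infinitesimal strategy at time $t$: place a single-share ask limit order just in front of $a_t$ (allowed because the tick is taken to be null and, by Assumption \ref{LIFM:mmhyp}, an infinitesimal order does not affect the market dynamics), and cancel it if no buy market order arrives at $t$. By the execution convention adopted in Section \ref{LIFM:fairprice}, this order is hit exactly on the event $\{v^A_t>0\}$; on that event the seller receives $a_t$ (in the limit as the ``just in front'' margin shrinks to zero) while the fair price immediately afterwards is $P_{t^+}$, so the ex post gain equals $a_t - P_{t^+}$. On the complementary event the strategy ends flat with zero gain.

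Second I would invoke Proposition \ref{LIFM:epgpro} conditional on $\mathcal{F}_t^*$. Since $a_t$ is $\mathcal{F}_t^*$-measurable (the best ask is read off the left-continuous market information just before $t$), the zero-gain equation reads
$$\mathbb{P}[v^A_t>0 \mid \mathcal{F}_t^*]\,\bigl(a_t - \E[P_{t^+} \mid v^A_t>0,\mathcal{F}_t^*]\bigr) = 0,$$
which collapses to \eqref{LIFM:ask} whenever the buy-arrival probability is positive. The identity \eqref{LIFM:bid} follows from the symmetric strategy with a bid limit order just in front of $b_t$.

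The main technical obstacle is that at a typical instant $\mathbb{P}[v^A_t>0 \mid \mathcal{F}_t^*]=0$ in continuous time, so the conditional expectation on $\{v^A_t>0\}$ must be interpreted through the limiting prescription given in the footnote attached to \eqref{LIFM:ask}. I would therefore redo the argument with the strategy that maintains the ask quote over a short window $[t,t+\varepsilon]$ and cancels at $t+\varepsilon$ if unexecuted, obtaining
$$a_t = \E\bigl[P_{\tau^A_t} \,\big|\, \tau^A_t\in[t,t+\varepsilon],\,\mathcal{F}_t^*\bigr]$$
whenever $\mathbb{P}[\tau^A_t\in[t,t+\varepsilon]\mid \mathcal{F}_t^*]>0$, and then pass to $\varepsilon\to 0^+$, which is precisely the limit whose existence is postulated in the footnote. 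A secondary point to check is that, while holding the quote, the best ask does not drift strictly below the price of our order before execution; this is handled by the ``cancel-and-repost just in front of the best ask'' device already used in the proof of Proposition \ref{LIFM:p1}, so no further work is needed.
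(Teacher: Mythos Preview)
Your proposal is correct and follows the same approach as the paper: place an infinitesimal ask limit order just in front of $a_t$, maintained over a short window $[t,t+\varepsilon]$, and use the zero expected ex post gain of that order (Proposition \ref{LIFM:epgpro}) to force $a_t=\E[P_{t^+}\mid v^A_t>0,\mathcal{F}_t^*]$, with the bid case symmetric. The paper's own proof is the one-line contradiction ``if $a_t>\E[P_{t^+}\mid v^A_t>0,\mathcal{F}_t^*]$ then this strategy has positive average ex post gain for $\varepsilon$ small enough''; your version is simply a more explicit write-up of the same idea, including the $\varepsilon\to 0^+$ passage that the paper leaves implicit in its footnote.
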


\begin{proof}
If $a_t>\E[P_t^+|v^A_t>0,\mathcal{F}_{t}^*]$, placing an infinitesimal ask limit order just in front of $a_t$ between $t$ and $\min(t+\varepsilon,\tau^A_t)$ gives a positive average ex post gain for $\varepsilon$ small enough.
\end{proof}

\noindent
In spite of the recurrence of the concept of market impact of market orders in the microstructure literature, there is no consensus on how it should be defined or even on what price it should be computed. In our framework, we can give a natural and model independent definition of the local average market impact of a trade.

\begin{definition}
For any given market dynamics, we define the local average market impact at time $t$ of a bid or ask market order as:
$$AMI^{ask}_t=\E[P_{t^+}|v^A_t>0,\mathcal{F}^*_{t}]-P_t$$
$$AMI^{bid}_t=\E[P_{t^+}|v^B_t>0,\mathcal{F}^*_{t}]-P_t.$$
\end{definition}

\noindent
By calling it local, we want to stress that this value can depend on the market state at time $t$ and is thus difficult to compute empirically.

\begin{remark}
For any $h>0$, we could have taken the definitions:
$$AMI^{ask}_t=\E[P_{t+h}|v^A_t>0,\mathcal{F}^*_{t}]-P_t$$and
$$AMI^{bid}_t=\E[P_{t+h}|v^B_t>0,\mathcal{F}^*_{t}]-P_t.$$

\noindent
Indeed, since $P$ is a martingale, the definition of the impact does not depend on the horizon. This independence is another good reason to consider the impact of market orders on the fair price and not on an arbitrary price as it is often done.

\end{remark}

\noindent The following result is immediate.

\begin{proposition}
\label{LIFM:corspread}
With this definition of market impact, Proposition \ref{LIFM:p2} can be rewritten:
$$a_t=P_t+AMI^{ask}_t$$
and
$$b_t=P_t+AMI^{bid}_t.$$

\end{proposition}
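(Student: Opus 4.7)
The plan is to observe that this proposition is a purely algebraic consequence of combining Proposition \ref{LIFM:p2} with the definition of $AMI^{ask}_t$ and $AMI^{bid}_t$, together with the fact that the fair price $P_t$ is $\mathcal{F}^*_t$-measurable (since by Equation \eqref{LIFM:deffairprice}, $P_t = \E[na_t | \mathcal{F}^*_t]$ is defined as a conditional expectation with respect to $\mathcal{F}^*_t$).

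First I would recall that by Proposition \ref{LIFM:p2}, under Assumption \ref{LIFM:mmhyp} we have
\[
a_t = \E[P_{t^+} \mid v^A_t > 0, \mathcal{F}_{t}^*].
\]
Then I would invoke the definition of the local average market impact, namely
\[
AMI^{ask}_t = \E[P_{t^+} \mid v^A_t > 0, \mathcal{F}_{t}^*] - P_t,
\]
which rearranges to $\E[P_{t^+} \mid v^A_t > 0, \mathcal{F}_{t}^*] = P_t + AMI^{ask}_t$. Substituting into the expression for $a_t$ yields the first claim $a_t = P_t + AMI^{ask}_t$. The bid side is treated symmetrically, using Equation \eqref{LIFM:bid} and the definition of $AMI^{bid}_t$.

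Since the identities follow by a single substitution once the definitions are in place, there is no genuine obstacle. The only subtlety worth noting is that $P_t$ being $\mathcal{F}^*_t$-measurable is what allows us to write the market impact as a simple additive shift of the conditional expectation; this is automatic from Definition \eqref{LIFM:deffairprice}. Hence the proof reduces to at most a few lines, which is presumably why the author calls the result immediate.
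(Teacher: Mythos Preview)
Your proposal is correct and matches the paper's treatment: the paper simply states that the result is immediate and gives no proof, since as you observe the identities follow by direct substitution of the definition of $AMI^{ask}_t$ and $AMI^{bid}_t$ into Proposition~\ref{LIFM:p2}.
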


\begin{remark}
This implies the dynamics of $(a_t,b_t,v^A_t,v^B_t)$ can be deduced from the dynamics of $(P_t,v^A_t,v^B_t)$. There is thus an equivalence between these two dynamics under the fair market hypothesis.
\end{remark}

\noindent
Proposition \ref{LIFM:corspread} implies that the spread
$$\Phi_t=a_t-b_t=AMI^{ask}_t-AMI^{bid}_t$$
is completely explained by adverse selection. This is in fact not surprising since Assumption \ref{LIFM:mmhyp} essentially says that adverse selection ``compensates'' the position of liquidity that market makers set in the order book. In the next paragraph, we show that we can test Proposition \ref{LIFM:p2} on data and we will see that it is indeed well satisfied (especially for small tick assets).

\subsection{Verification of Proposition \ref{LIFM:p2} on data}
\label{LIFM:32}

As it is not possible to compute model independently the fair price, neither is it possible to do it for the average market impact of trades $AMI$. However let us show that we can empirically compute adequate ex post gains and thus check Propositions \ref{LIFM:p2}.\\

\noindent
By definition of $P_t$, Equations \eqref{LIFM:ask} and \eqref{LIFM:bid} can be rewritten:
$$a_t=\E[P_{t^+}|v^A_t>0,\mathcal{F}^*_{t}]=\E[na_{t^{+}}|v^A_t>0,\mathcal{F}^*_{t}]\footnote{Which in fact means $\underset{\varepsilon\rightarrow 0+}{\lim}\E[na_{t+\varepsilon}|\tau^A_t\in[t,t+\varepsilon],\mathcal{F}^*_t]$.}$$ and $$b_t=\E[P_{t^+}|v^B_t>0,\mathcal{F}^*_{t}]=\E[nb_{t^{+}}|v^B_t>0,\mathcal{F}^*_{t}].$$

\noindent
Let us now proceed as in Section \ref{LIFM:sdata}. We need the following lemma:
\begin{lemma}
Set
$$a_t^\varepsilon=\E[na_{t+\varepsilon}|\tau^A_t\in[t,t+\varepsilon],\mathcal{F}_t^*]$$
and assume for all $E_t\in \mathcal{F}_t^*$,
$$\E[(a_t^\varepsilon-a_t) 1_{E_t}1_{\tau^A_t\in[t,t+\varepsilon]}]/\mathbb{P}[\tau^A_t\in[t,t+\varepsilon]]\rightarrow 0$$
which is true in any reasonable model since $a^\varepsilon_t\rightarrow a_t$. Then, we have
$$\forall E\in \tilde{\sigma},\text{   }\frac{\E[a_t 1_{E}((M_{t-y})_{y>0})|v^A_t>0]}{\mathbb{P}[(M_{t-y})_{y>0}\in E|v^A_t>0]}=\frac{\E[na_{t^+} 1_{E}((M_{t-y})_{y>0})|v^A_t>0]\footnote{Where the numerator means $\underset{\varepsilon\rightarrow 0+}{\lim}\E[na_{t+\varepsilon}1_{E}((M_{t-y})_{y>0})|\tau^A_t\in[t,t+\varepsilon]]$ and the denominator means $\underset{\varepsilon\rightarrow 0+}{\lim}\mathbb{P}[(M_{t-y})_{y>0}\in E |\tau^A_t\in[t,t+\varepsilon]]$.}}{\mathbb{P}[(M_{t-y})_{y>0}\in E|v^A_t>0]}$$
and
$$\forall E\in \tilde{\sigma},\text{  }\frac{\E[b_t 1_{E}((M_{t-y})_{y>0})|v^B_t>0]}{\mathbb{P}[(M_{t-y})_{y>0}\in E|v^B_t>0]}=\frac{\E[nb_{t^+} 1_{E}((M_{t-y})_{y>0})|v^B_t>0]}{\mathbb{P}[(M_{t-y})_{y>0}\in E|v^B_t>0]}.$$
\end{lemma}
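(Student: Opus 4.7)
The plan is to extract, by applying the defining property of conditional expectation, an integrated version of the identity $a_t^\varepsilon \to a_t$ against indicators of events in $\mathcal{F}_t^*$, and then take the $\varepsilon \to 0^+$ limit using the stated hypothesis. The key observation to set things up is that since $\mathcal{F}_t^*$ is generated by $(M_{t-y})_{y>0}$, every set of the form $E_t = \{(M_{t-y})_{y>0} \in E\}$ with $E\in\tilde{\sigma}$ lies in $\mathcal{F}_t^*$, so the hypothesis can be applied with exactly this $E_t$.

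Concretely, I would first invoke the defining property of $a_t^\varepsilon = \E[na_{t+\varepsilon}\mid \tau^A_t\in[t,t+\varepsilon],\mathcal{F}_t^*]$. By definition of conditional expectation given an event and a sigma-algebra, for $E_t\in\mathcal{F}_t^*$,
$$\E[na_{t+\varepsilon}\,1_{E_t}\,1_{\tau^A_t\in[t,t+\varepsilon]}] \;=\; \E[a_t^\varepsilon\,1_{E_t}\,1_{\tau^A_t\in[t,t+\varepsilon]}].$$
Dividing both sides by $\mathbb{P}[\tau^A_t\in[t,t+\varepsilon]]$ gives an equality of the two ratios. Next, I would apply the standing hypothesis $\E[(a_t^\varepsilon-a_t)\,1_{E_t}\,1_{\tau^A_t\in[t,t+\varepsilon]}]/\mathbb{P}[\tau^A_t\in[t,t+\varepsilon]]\to 0$ to replace $a_t^\varepsilon$ by $a_t$ on the right-hand side, up to an $o(1)$ term as $\varepsilon\to 0^+$. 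At this point both sides are of the form ``numerator divided by $\mathbb{P}[\tau^A_t\in[t,t+\varepsilon]]$'' with $a_t$ (respectively $na_{t+\varepsilon}$) inside the expectation. Dividing once more by the ratio $\mathbb{P}[(M_{t-y})_{y>0}\in E,\,\tau^A_t\in[t,t+\varepsilon]]/\mathbb{P}[\tau^A_t\in[t,t+\varepsilon]]$, which is exactly $\mathbb{P}[(M_{t-y})_{y>0}\in E \mid \tau^A_t\in[t,t+\varepsilon]]$, and using the footnote convention that conditioning on $\{v^A_t>0\}$ means the $\varepsilon\to 0^+$ limit of conditioning on $\{\tau^A_t\in[t,t+\varepsilon]\}$, I recover the claimed identity. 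The bid case is completely symmetric, obtained by replacing $a$, $na$, $\tau^A$, $v^A$ by $b$, $nb$, $\tau^B$, $v^B$ throughout.

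The main obstacle is not algebraic but rather the existence and interchange of the various $\varepsilon\to 0^+$ limits: we need the conditional probabilities $\mathbb{P}[(M_{t-y})_{y>0}\in E \mid \tau^A_t\in[t,t+\varepsilon]]$ to admit well-defined limits as $\varepsilon\to 0^+$, and we need the convergence $a_t^\varepsilon\to a_t$ to hold in the averaged sense stated. Both are genuine regularity requirements on the model (amounting to the market state process admitting a sufficiently regular conditional law at the arrival of a market order), which is why the authors have quarantined them inside an explicit hypothesis rather than trying to derive them. Once those are granted, the proof itself reduces to the chain of manipulations above.
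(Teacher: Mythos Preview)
Your proposal is correct and follows essentially the same route as the paper: use the defining property of the conditional expectation $a_t^\varepsilon$ to equate $\E[a_t^\varepsilon 1_{E_t}1_{\tau^A_t\in[t,t+\varepsilon]}]$ with $\E[na_{t+\varepsilon}1_{E_t}1_{\tau^A_t\in[t,t+\varepsilon]}]$, divide by $\mathbb{P}[\tau^A_t\in[t,t+\varepsilon]]$ and by $\mathbb{P}[E_t,\tau^A_t\in[t,t+\varepsilon]]/\mathbb{P}[\tau^A_t\in[t,t+\varepsilon]]$, then pass to the limit $\varepsilon\to 0^+$ with $E_t=\{(M_{t-y})_{y>0}\in E\}$. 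Your added discussion of the regularity issues (existence of the limits and the averaged convergence $a_t^\varepsilon\to a_t$) is a fair elaboration of what the paper simply folds into the stated hypothesis.
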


\begin{proof}
By definition of the conditional expectation for all $E_t\in \mathcal{F}_t^*$,
$$\frac{\E[a_t^\varepsilon 1_{E_t}1_{\tau^A_t\in[t,t+\varepsilon]}]/\mathbb{P}[\tau^A_t\in[t,t+\varepsilon]]}{\mathbb{P}[E_t,\tau^A_t\in[t,t+\varepsilon]]/\mathbb{P}[\tau^A_t\in[t,t+\varepsilon]]}
=
\frac{\E[na_{t+\varepsilon} 1_{E_t}1_{\tau^A_t\in[t,t+\varepsilon]}]/\mathbb{P}[\tau^A_t\in[t,t+\varepsilon]]}{\mathbb{P}[ E_t,\tau^A_t\in[t,t+\varepsilon]]/\mathbb{P}[\tau^A_t\in[t,t+\varepsilon]]}.$$
Going to the limit as $\varepsilon$ tends to zero and replacing the event $E_t$ by $(M_{t-y})_{y>0}\in E$ as in Section \ref{LIFM:sdata} ends the proof.
\end{proof}

\noindent Let us now assume that $$\frac{\E[(a_t-na_{t^+}) 1_{E}((M_{t-y})_{y>0})|v^A_t>0]}{\mathbb{P}[(M_{t-y})_{y>0}\in E|v^A_t>0]}$$ and $$\frac{\E[(b_t-nb_{t^+}) 1_{E}((M_{t-y})_{y>0})|v^B_t>0]}{\mathbb{P}[(M_{t-y})_{y>0}\in E|v^B_t>0]}$$ do not depend on $t$.\\

\noindent
Considering the same $E^i$ as before and summing over all the $N^{ask}$ and $N^{bid}$ ask and bid market order times $(t^{ask}_n)$ and $(t^{bid}_n)$ of the assets of our database during January 2012, we need to show that
\begin{subequations}
\begin{equation}
\label{LIFM:eqbidask}
\Delta A^i=\frac{\displaystyle\frac{1}{N^{ask}} \sum_{k=1}^{N^{ask}} 1_{E^i_{t^{ask}_n}}(a_{t^{ask}_n}-na_{(t^{ask}_n)^+})}{\displaystyle\frac{1}{N^{ask}} \sum_{k=1}^{N^{ask}} 1_{E^i_{t^{ask}_n}}}\simeq  0
\end{equation}
and
\begin{equation}
\label{LIFM:eqbidbid}
\Delta B^i=\frac{\displaystyle\frac{1}{N^{bid}}\sum_{k=1}^{N^{bid}} 1_{E^i_{t^{bid}_n}}(b_{t^{bid}_n}-nb_{(t^{bid}_n)^+})}{\displaystyle \frac{1}{N^{bid}}\sum_{k=1}^{N^{bid}} 1_{E^i_{t^{bid}_n}}}\simeq  0.
\end{equation}
\end{subequations}
In Table \ref{LIFM:tablebidask}, we list the empirical $\Delta A^i$ and $\Delta B^i$ for our list of futures after the events $(E^i)$ defined by Equations \eqref{LIFM:eqbidask} and \eqref{LIFM:eqbidbid}.\\
\begin{table}[H]
\begin{center}
\begin{tabular}{|c|c|c|c|c|c|c|}
\hline
Stock & $\Delta A^0$ & $\Delta A^1$ & $\Delta B^0$ & $\Delta B^1$ & Tick & NOrders \\
\hline
xFGBL & -1.00e-6 & 1.91e-4 & -1.09e-6 & -2.08e-4 & 1.e-2 & 8.00e5 \\
\hline
xFGBS & -1.16e-6 & 6.00e-5 & 4.72e-7 & -7.95e-5 & 5.e-3 & 1.98e5 \\
\hline
xFGBM & -1.08e-6 & 1.62e-4 & 7.41e-7 & -1.84e-4 & 1.e-2 & 3.11e5 \\
\hline
xFDAX & 1.35e-3 & 8.10e-3 & 1.48e-3 & -5.77e-3 & 5.e-1 & 1.23e6 \\
\hline
xFSXE & 3.74e-4 & 2.05e-2 & 4.36e-4 & -2.66e-2 & 1. & 8.27e5 \\
\hline
xFEURO & 2.44e-8 & 1.15e-6 & 2.52e-8 & -1.02e-6 & 1.e-4 & 1.63e6 \\
\hline
xFAD & 1.64e-7 & 7.66e-7 & 1.59e-7 & -3.74e-7 & 1.e-4 & 6.37e5 \\
\hline
xFJY & 4.01e-8 & 4.41e-8 & 6.39e-10 & 4.14e-10 & 1.e-6 & 1.27e4 \\
\hline
xFBP & 1.49e-7 & 2.93e-7 & 1.29e-7 & -2.55e-7 & 1.e-4 & 4.66e5 \\
\hline
xFCAD & 1.43e-7 & 5.90e-7 & 1.45e-7 & -1.68e-7 & 1.e-4 & 3.98e5 \\
\hline
xFCH & 2.47e-7 & -9.47e-7 & 1.66e-7 & 1.34e-6 & 1.e-4 & 1.90e5 \\
\hline
xFSP & 2.73e-5 & 2.56e-3 & 2.63e-5 & -2.64e-3 & 2.5e-1 & 2.98e6 \\
\hline
xFND & 3.73e-4 & 2.51e-3 & 3.61e-4 & -1.32e-3 & 2.5e-1 & 1.12e6 \\
\hline
xFDJ & 7.97e-4 & 2.53e-3 & 9.24e-4 & 1.22e-3 & 1. & 7.33e5 \\
\hline
xFGOLD & 1.88e-3 & 6.89e-3 & 6.11e-4 & 1.53e-4 & 1.e-1 & 9.02e5 \\
\hline
xFCOPP & 2.08e-6 & 6.10e-6 & 2.15e-6 & -1.79e-6 & 5.e-4 & 3.66e5 \\
\hline
\end{tabular}
\end{center}
\caption{Average next ask (resp. bid) after an ask (resp.bid) market order compared to the ask (resp. bid) price just before the order after some events.}
\label{LIFM:tablebidask}
\end{table}

\noindent As expected, we see that the $\Delta A^i$ and $\Delta B^i$ are always significantly smaller than a tick.\\

\noindent
The nullity of average ex post gains of limit orders placed at the best prices for small tick assets shows that simple market making strategies can only be marginally profitable on average. Therefore, we have empirically shown that, as postulated in \cite{madhavan1997why} and \cite{wyart2008relation}, adverse selection and the tick value explain the bid and ask prices and thus the spread. The risk aversion of market makers and their order processing costs seem to have negligible influences on the spread.

\subsection{Response functions, time horizon and market making}

In \cite{wyart2008relation}, similar conditions on the profit of market making strategies are used except that the ex post gain of limit orders is computed compared to the mid price a time $\delta t$ after the order: $E[a_t-m_{t+\delta t}|v_t^A>0]$ for ask limit orders and $E[m_{t+\delta t}-b_t|v_t^B>0]$ for bid limit orders.\\

\noindent However, the mid price is not a martingale in general. In particular, the empirical response function is not constant but significantly (compared to the tick value) increases over time, see \cite{bouchaud2004fluctuations}. Therefore, the ex post gain of market makers should depend on the horizon of their strategies. In particular, it is stated that fast market making should be more profitable than slow market making.\\

\noindent This is inconsistent with our framework where the average profit of a market making strategy does not depend on the ``horizon'' of the market maker.\\

\noindent In order to empirically show that the average profit of market making strategies does not seem to depend much on the horizon of market makers, we plot the ask (resp. bid) response function of the fair price defined as:
$$RA^P(\delta)=\E[P_{t+\delta}-P_t|v^A_t>0]$$ and $$RB^P(\delta)=\E[P_{t+\delta}-P_t|v^B_t>0]$$
which also write
$$RA^P(\delta)=\E[na_{t+\delta}-P_t|v^A_t>0]$$ and $$RB^P(\delta)=\E[nb_{t+\delta}-P_t|v^B_t>0]$$
by definition of the fair price.\\

\noindent
In our framework, this function should not depend on $\delta$ since $P$ is a martingale. Let us notice that since we cannot empirically compute model independently the fair price before the orders, we cannot measure $RA^P$ and $RB^P$. However, we can compute the empirical variations of the ask (resp. bid) response function on the fair price defined as the empirical average over the ask and bid order times of January 2012:
$$RA^P(\delta)-RA^P(0^+)=\frac{1}{N^{ask}}\sum_{i=1}^{N^{ask}} (na_{t^{ask}_n+\delta}-na_{(t^{ask}_n)^+})$$ and $$RB^P(\delta)-RB^P(0^+)=\frac{1}{N^{bid}}\sum_{i=1}^{N^{bid}} (nb_{t^{bid}_n+\delta}-nb_{(t^{bid}_n)^+})$$
for three liquid assets, see Figures \ref{LIFM:resp1}, \ref{LIFM:resp2} and \ref{LIFM:resp3}.

\begin{figure}[H]
\centering
\includegraphics[width=90mm,height=56mm]{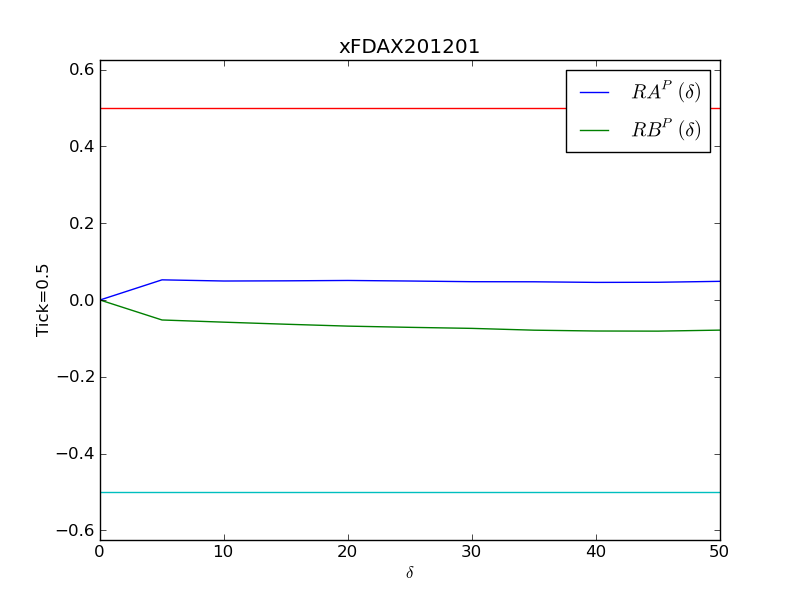}
\caption{Ask (in blue) and bid (in green) response functions for the DAX future. The two horizontal lines correspond to plus (in red) and minus (in cyan) the tick value.}
\label{LIFM:resp1}
\end{figure}

\begin{figure}[H]
\centering
\includegraphics[width=90mm,height=56mm]{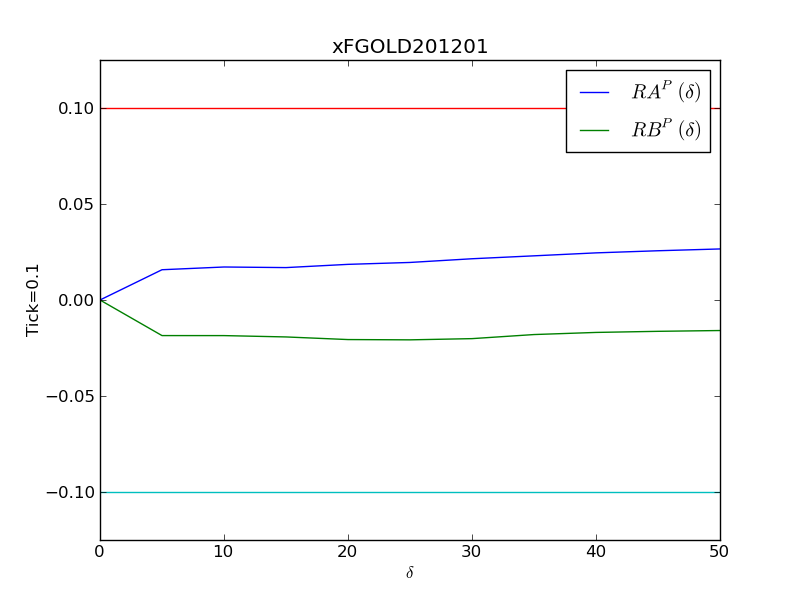}
\caption{Ask (in blue) and bid (in green) response functions for the GOLD future. The two horizontal lines correspond to plus (in red) and minus (in cyan) the tick value.}
\label{LIFM:resp2}
\end{figure}

\begin{figure}[H]
\centering
\includegraphics[width=90mm,height=56mm]{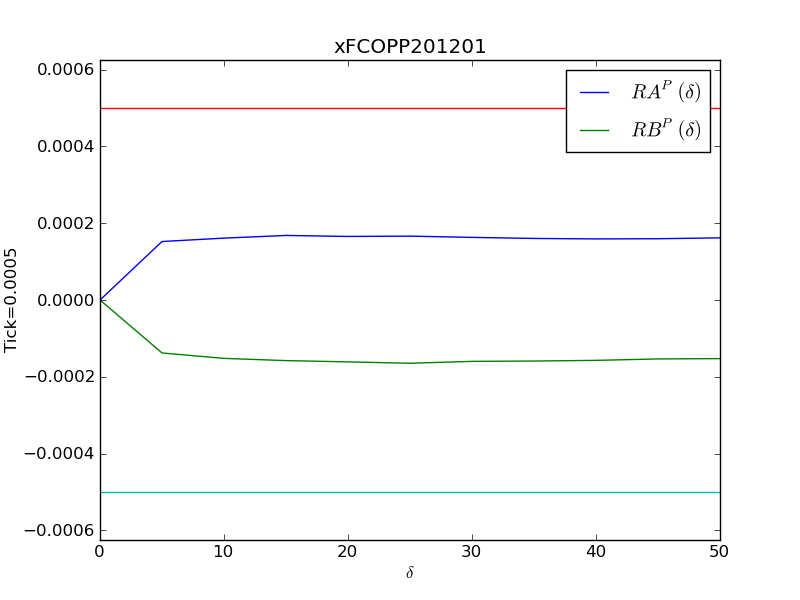}
\caption{Ask (in blue) and bid (in green) response functions for the COPP future. The two horizontal lines correspond to plus (in red) and minus (in cyan) the tick value.}
\label{LIFM:resp3}
\end{figure}

\noindent We observe on Figures \ref{LIFM:resp1}, \ref{LIFM:resp2} and \ref{LIFM:resp3} that in accordance with our predictions, the variations of the response functions computed on the fair price  are small compared to the tick value.

\begin{remark}
Notice that there is a slight increment of the ask response function at very small time scales (below the second) which implies that fast market making might be marginally more profitable than slow market making. However, this increment is always small compared to the tick value.
\end{remark}

\subsection{On tick value, market making and liquidity}

We have shown evidence that for small tick assets, the distance of some liquidity from the best price was perfectly compensated by the impact of market orders which consumed this liquidity. This gave ex post gains close to zero for market makers. For large tick assets, it is no longer the case. Indeed, market makers cannot any more perfectly compete by putting liquidity as close to the mid price as they might wish therefore, the competition is in the speed at which they take the priority.\\

\noindent
This seems to make the markets less efficient. On the other hand, a large tick market might attract market makers and thus liquidity. Moreover, a large tick simplifies the information flow that practitioners get. The optimal tick value, should be computed with these arguments in mind.
See \cite{dayri2012large} for an example of computation of an optimal tick value.


\section{Conclusion}
\label{LIFM:conclusion}


We have studied continuous time market dynamics which only satisfy a fair market hypothesis. Theoretically, this hypothesis can be justified by arguments of perfect competition between market makers. In this general framework, we have given precise and model independent definitions of concepts such as fair price, market impact and liquidity that are often present in the literature on order book dynamics. Taking our fair price as a reference price, we have derived a relationship between, market impact and liquidity which is supported by real data analysis.\\

\noindent
From a practical point of view, the strategies of market participants such as brokers depend a lot on liquidity and impact. Such agents should thus calibrate their strategies using models which accurately reproduce stylized relationships between these quantities. In particular, in such models, the ex post gains of limit orders compared to a martingale reference price should be equal to zero.\\

\noindent
Our framework implies that market making strategies are not profitable. Obviously, this is not rigorously true because it is the competition between market makers that makes our hypothesis reasonable. The first reason for the coherence between our framework and the existence of market makers is the tick value which can lead to profits for market makers. The second reason is that we have only considered very simple strategies and small statistical arbitrages may remain for more complex strategies. Finally some market participants are encouraged to provide liquidity in order to obtain advantages for other market activities.


\bibliographystyle{abbrv}
\bibliography{bibliofp}

\end{document}